\newcommand{\Rmnum}[1]{\expandafter\@slowromancap\romannumeral #1@}
\newcommand\blfootnote[1]{%
  \begingroup
  \renewcommand\thefootnote{}\footnote{#1}%
  \addtocounter{footnote}{-1}%
  \endgroup
}
\newcommand{\ba}{\begin{array}}
\newcommand{\ea}{\end{array}}
\newcommand{\be}{\begin{displaymath}}
\newcommand{\ee}{\end{displaymath}}
\newcommand{\ben}{\begin{equation}}
\newcommand{\een}{\end{equation}}
\newcommand{\bena}{\begin{eqnarray}}
\newcommand{\eena}{\end{eqnarray}}
\newcommand{\beqa}{\begin{eqnarray*}}
\newcommand{\enqa}{\end{eqnarray*}}
\newcommand{\bc}{\begin{center}}
\newcommand{\ec}{\end{center}}
\newcommand{\bi}{\begin{itemize}}
\newcommand{\ei}{\end{itemize}}
\newcommand{\benu}{\begin{enumerate}}
\newcommand{\eenu}{\end{enumerate}}
\newcommand{\bdes}{\begin{description}}
\newcommand{\edes}{\end{description}}
\newcommand{\bt}{\begin{tabular}}
\newcommand{\et}{\end{tabular}}
\newcommand \phibf{\mbox{\boldmath$\phi$\unboldmath}}
\newcommand \Sigmabf{\hbox{$\bf \Sigma$}}
\newcommand \abf{{\bf a}}
\newcommand \ebf{{\bf e}}
\newcommand \gbf{{\bf g}}
\newcommand \hbf{{\bf h}}
\newcommand \qbf{{\bf q}}
\newcommand \rbf{{\bf r}}
\newcommand \vbf{{\bf v}}
\newcommand \wbf{{\bf w}}
\newcommand \xbf{{\bf x}}
\newcommand \ybf{{\bf y}}
\newcommand \zbf{{\bf z}}
\newcommand \Abf{{\bf A}}
\newcommand \Cbf{{\bf C}}
\newcommand \Ebf{{\bf E}}
\newcommand \Ibf{{\bf I}}
\newcommand \Qbf{{\bf Q}}
\newcommand \Rbf{{\bf R}}
\newcommand \Ubf{{\bf U}}
\newcommand \Vbf{{\bf V}}
\newcommand{\Cnum}{{\mathbb C}}
\newcommand{\Enum}{{\mathbb E}}
\newcommand{\circlambda}{\mbox{$\Lambda$
             \kern-.85em\raise1.5ex
             \hbox{$\scriptstyle{\circ}$}}\,}
\newcommand{\tr}{\mathop{\rm tr}}
\def\Re{\mathop{\rm Re}}
\newtheorem{Theorem}{Theorem}[section]
\newtheorem{Lemma}[Theorem]{Lemma}
\newcommand{\g}[1]{\bm #1}
\begin{document}

\title{Hybrid Interference Mitigation Using Analog Prewhitening }

\author{Wei Zhang \quad Yi Jiang \quad Bin Zhou \quad Die Hu 
}


\maketitle
\blfootnote{Work in this paper was supported by National Natural Science Foundation of China Grant No. 61771005. The partial material in the paper has be published in the 2020 IEEE Sensor Array and Multichannel (SAM) Signal Processing Workshop.

W. Zhang, Y. Jiang (the corresponding author), and D. Hu are with 
Key Laboratory for Information Science of Electromagnetic Waves (MoE), Department of Communication Science and Engineering, Fudan University, Shanghai, China (emails: 19110720048@fudan.edu.cn, yijiang@fudan.edu.cn, hudie@fudan.edu.cn).

B. Zhou is with the Key Laboratory of Wireless Sensor Network and Communications, Chinese Academy of Sciences (CAS) (e-mail: bin.zhou@mail.sim.ac.cn).
}
\begin{abstract}
This paper proposes a novel scheme for mitigating strong interferences, which is applicable to various wireless scenarios, including full-duplex wireless communications and uncoordinated heterogenous networks. As strong interferences can saturate the receiver's analog-to-digital converters (ADC), they need to be mitigated both before and after the ADCs, i.e., via hybrid processing. The key idea of the proposed scheme, namely the Hybrid Interference Mitigation using Analog Prewhitening (HIMAP), is to insert an $M$-input $M$-output analog phase shifter network (PSN) between the receive antennas and  the ADCs to spatially prewhiten the interferences, which requires no signal information but only an estimate of the covariance matrix. After interference mitigation by the PSN prewhitener, the preamble can be synchronized, the signal channel response can be estimated, and thus a minimum mean squared error (MMSE) beamformer can be applied in the digital domain to further mitigate the residual interferences. The simulation results verify that the HIMAP scheme can suppress interferences 80dB stronger than the signal by using off-the-shelf phase shifters (PS) of 6-bit resolution.
\end{abstract}

\begin{IEEEkeywords}
hybrid interference mitigation, phase shifter network, prewhitening, full-duplex wireless, heterogeneous network, strong interference
\end{IEEEkeywords}

%
\IEEEpeerreviewmaketitle

\section{Introduction}
Multi-antenna beamforming for interference mitigation has been widely used in wireless communications over the ever-increasingly crowded frequency spectrum \cite{Tse2004Fundamentals}. The conventional interference mitigation methods, including the zero-forcing (ZF) and minimum mean squared error (MMSE) beamforming, are usually conducted in the digital domain.

In recent years, owing to the advent of full-duplex wireless communications  \cite{7105651}\cite{Sabharwal2014In} and heterogeneous network \cite{6211486}\cite{6231163}\cite{7060498}, mitigating exceedingly strong interferences has drawn considerable attentions.  A strong interference, e.g., over 70dB stronger than the signal, can saturate the analog-to-digital converters (ADC), rendering large quantization noise that can overwhelm the digital signal processing (DSP) only approaches. As to improve the ADCs' bit resolution can be too costly \cite{7258468}, such strong interferences have to be mitigated in both analog and digital domain.

Multiple approaches have been proposed to mitigate strong interferences in the full-duplex wireless researches \cite{101145}\cite{1232011456}\cite{6648617}\cite{5985554}. The paper \cite{101145} proposes an antenna cancellation technique, which uses two transmit antennas and one receive antenna to mitigate the self-interference by having the signals of two transmit antennas combine destructively at the receive antenna. 
 As the self-interference is known, it can also be mitigated by feeding from the digital baseband to the analog circuit, so that it can be largely cancelled in the time domain \cite{1232011456}\cite{6648617}.
In the full-duplex relay scenario, the paper \cite{5985554} proposes to use multiple antennas for spatial interference suppression, in addition to natural separation and  time domain cancellation. 


Interference mitigation is also a major challenge in heterogeneous network \cite{6211486}\cite{6231163}. The so-called inter-cell interference coordination (ICIC)  technique, which includes power control, fractional frequency reuse, interference regeneration and cancellation, is not always feasible, as it could incur excessive overhead of coordination \cite{6211486}. Indeed, a transmitting macrocell user equipment (UE) at the edge of the macro-cell may impose severe interferences to a nearby non-accessible microcell, causing the so-called ``deadzone'' \cite{6231163}. The interference mitigation techniques proposed for the full duplex wireless are not applicable here as the external interference -- either its time-domain waveform or spatial direction -- is unknown.

In this paper, we propose a new scheme named Hybrid Interference Mitigation using Analog Prewhitening (HIMAP). Assuming an $M$-antenna receiver, we insert an $M$-input $M$-output  phase shifter network (PSN)  between the antenna ports and the ADCs. The HIMAP optimizes the PSN to spatially prewhiten the received interferences in the radio frequency (RF) domain. We show that the PSN can suppress the interferences significantly before entering the ADCs, leading to much reduced quantization noise. Consequently, a standard multi-antenna technique, such as the MMSE beamforming,  can be employed to suppress the residual interferences in the digital domain.

Different from our HIMAP scheme that adopts a ``square'' $M\times M$ PSN,  the paper \cite{VenkateswaranVeen2010} proposes to use a ``rectangular'' $M\times N$ PSN ($N<M$) as a subspace steerer so that the interferences orthogonal to the PSN subspace will be eliminated. The similar idea of subspace projection and ``MMSE filtering'' is also proposed in \cite{5985554}. But such approaches need to know the signal channel response, which may not be feasible in the presence of strong interferences. The HIMAP scheme, however, requires only the covariance matrix estimate.

The rationale of applying the analog prewhitener is as follows. In the presence of strong interferences, the inputs into the multiple ADCs are spatially correlated, which suggests a waste of the bit resolution of the ADCs. With the analog PSN prewhitening, the inputs to the ADCs become uncorrelated, leading to efficient usage of the ADC bits.

It is worthwhile mentioning that PSNs are standard components in modern communication systems. Besides interference mitigation, the PSN-based hybrid beamforming has been proposed for massive multi-input multi-output (MIMO) \cite{LarssonEdforsTufvessonMarzetta2014}  for high spectral efficiency with reduced hardware cost \cite{7397861}\cite{8927851}
\cite{7389996}, where a ``rectangular'' PSN connects a large number of antenna ports to a smaller number of RF chains.

The simulation results verify that the HIMAP scheme can suppress interferences  80dB stronger than the signal-of-interest even using an off-the-shelf phase shifters (PS) of 6-bit resolution. Since the HIMAP scheme assumes no knowledge of the interference waveform, it works for not only full-duplex wireless communications but the non-cooperative scenarios, such as heterogenous networks and anti-jamming tactical communications.

The remainder of this paper is organized as follows. Section \ref{SEC2} establishes the system model and illustrates the conception of ideal prewhitening. Section \ref{SEC3} introduces the HIMAP as a five-step scheme and proposes a heuristic PSN-based method to accomplish prewhitening in analog domain. Numerical results are presented in Section \ref{SEC4} to demonstrate the effectiveness of the proposed scheme. The conclusions are drawn in Section \ref{SEC5}.

\textit{Notations:} $(\cdot)^H$ denotes the conjugate transpose, $ (\cdot)^*$ stands for complex conjugate. and $||\cdot||^2$, $\Enum(\cdot)$, $\tr(\cdot)$ represent a matrix's norm, determinant, and trace, respectively. $\text{diag}(\vbf)$ stands for a diagonal matrix with vector $\vbf$ being the diagonal element. $\ebf_n\in {\mathbb C}^N$ denotes an $N$-element vector with the $n$th element being one and others being zero. $|{\cal S}|$ is the cardinality of the set ${\cal S}$. $x\sim N(\mu,\sigma^2)$ is a Gaussian random variable with mean $\mu$ and variance $\sigma^2$.

\section{Signal Model and Preliminaries} \label{SEC2}
Consider the received signal vectors by an $M$-element antenna array
\ben \label{eq.y}
\ybf(t) = \hbf x(t) + \sum_{k=1}^K \gbf_k \xi_k(t) + \g{\eta}(t),
\een
where $\hbf \in {\mathbb C}^{M\times1}$ is the array response of the signal $x(t)$ with power ${\mathbb E}[|x(t)|^2]=\sigma_x^2$, $\xi_k(t)$ represents the $k$th interference,  $\gbf_k\in\Cnum^{M\times1}$ is its array response, and $\g{\eta}(t)\in\Cnum^{M\times1}$ is the thermal noise.

Denote $\zbf(t) \triangleq \sum_{k=1}^K \gbf_k \xi_k(t) + \g{\eta}(t)$ as the lump sum of the interference-plus-noise, whose covariance is $\Cbf_z = {\mathbb E}[\zbf(t)\zbf(t)^H] \in {\mathbb C}^{M\times M}$.

\subsection{Linear MMSE Beamforming}
To suppress the interferences, we adopt the linear MMSE beamformer \cite{Tse2004Fundamentals}
\ben \wbf = \sigma_x^2 \Rbf_y^{-1}\hbf, \label{eqwmmse}\een
where
\ben \Rbf_y = {\mathbb  E}[\ybf(t)\ybf(t)^H] = \sigma_x^2\hbf \hbf^H + \Cbf_z\een
 is the covariance matrix.

 Using the matrix inversion lemma \cite{Hager1989}, we have
 \ben
\Rbf_y^{-1} = \Cbf_z^{-1} - \Cbf_z^{-1}\hbf \hbf^H \Cbf_z^{-1}/(\sigma_x^{-2} + \hbf^H \Cbf_z^{-1} \hbf). \label{eqRy-1}
 \een
Combining (\ref{eqwmmse}) and (\ref{eqRy-1}) yields
 \ben \wbf = \alpha \Cbf_z^{-1}\hbf, \label{eqwmmseQ}\een
for some scalar $\alpha >0$.
 Applying $\wbf$ to the received vector $\ybf(t)$, we obtain
\ben
\hat{x}(t) = \wbf^H \ybf(t),
\een
for which the post-processing signal to interference-plus-noise ratio (PPSINR) is \cite{4799043}
\ben \label{sinrOUT}
\rho_{\rm mmse}  =
\sigma_x^2\hbf^H \Cbf_z^{-1} \hbf.
\een
The above expression assumes no estimation errors or ADC quantization noise.

In practice, the covariance matrix $\Rbf_y$ need to be estimated from the digital samples
\ben \ybf_n = {\cal Q}[\ybf(nT_s)] \label{eqyn}\een
and
\ben  \label{eq.Ry}
\hat{\Rbf}_y = \frac{1}{L}\sum_{n=1}^L \ybf_n \ybf_n^H,
\een
where $T_s$ is the sampling time interval and ${\cal Q}[\cdot]$ is the ADC quantization.
The array response $\hbf$ can be estimated based on the synchronized training sequence $\{ x_n, n=1,\ldots, L \}$ \cite{1254022}, i.e.,
\ben
\hat{\hbf} = \frac{1}{\sum_{n=1}^L |x_n|^2}\sum_{n=1}^L \ybf_n x_n^*.
\een
How to achieve the synchronization of the training sequence will be addressed in Section \ref{sec3.2}. The signal power $\sigma_x^2 = \frac{1}{L} \sum_{n=1}^{L} |x_n|^2 $.

\subsection{ADC Quantization Noise Due to Strong Interferences}
The presence of strong interferences, however, can induce excessive ADC quantization noise, which may void digital-only interference mitigation.
Indeed, the relation between ${\rm SQNR}_{\rm dB}$ and ${\rm SIR}_{\rm dB}$ is
\begin{equation}
{\sf SQNR}_{\rm dB} \approx {\sf SIR}_{\rm dB} + 6.02{\sf ENOB}-4.35,
\label{eq.SQNRdb}
\end{equation}
which is derived as follows.

The covariance matrix of quantization noise is \cite{7307134}
\begin{equation}
\Rbf_{\qbf}=\rho(1-\rho){\rm diag}(\Rbf_y),
\end{equation}
where $\rho = \frac{\pi\sqrt{3}}{2}2^{-2{\sf ENOB}}$, and ${\sf ENOB}$ stands for the effective number of bits of the ADC \cite{7258468}.
%
As the SINR at the antennas is
\begin{equation}
{\sf SINR} = \frac{\hbf^H\hbf\sigma_x^2}{\sum_{k=1}^K\gbf_k^H\gbf_k\sigma_{\xi_k}^2+M\sigma_{\eta}^2},
\label{equ.sinr}
\end{equation}
where $\sigma_{\xi_k}^2$ and $\sigma_{\eta}^2$ represent the power of the $k$th interferences and the thermal noise, respectively, the signal to quantization noise ratio (SQNR) can be expressed as
\begin{equation}
\begin{split}
{\sf SQNR} &= \frac{{\rm tr}(\hbf\hbf^H\sigma_x^2)}{\rho(1-\rho){\rm tr}(\Rbf_{y})}
\\&= \frac{1}{\rho(1-\rho)}\times\frac{\hbf^H\hbf\sigma_x^2}{\hbf^H\hbf\sigma_x^2+\sum_{k=1}^K\gbf_k^H\gbf_k\sigma_{\xi_k}^2+M\sigma_{\eta}^2}
\\& = \frac{1}{\rho(1-\rho)}\times\frac{1}{1+\frac{1}{\sf SINR}}
\\& \approx \frac{1}{\rho(1-\rho)}\times{\sf SINR},
\end{split}
\label{equ.Sqr}
\end{equation}
where the last approximation of (\ref{equ.Sqr}) holds for ${\sf SINR \ll 1}$ in the presence of strong interference. Thus, we can approximate ({\ref{equ.Sqr}}) as
\begin{equation}
{\sf SQNR}_{\rm dB} \approx {\sf SINR}_{\rm dB} - 10{\rm lg}(\rho - \rho^2).
\label{equ.sqrdB}
\end{equation}
Substituting $\rho = \frac{\pi\sqrt{3}}{2}2^{-2{\sf ENOB}}$ and further ignoring the higher-order term $\rho^2$, we can further express (\ref{equ.sqrdB}) as
\begin{equation}
{\sf SQNR}_{\rm dB} \approx {\sf SINR}_{\rm dB} + 6.02{\sf ENOB}-4.35.
\end{equation}
As ${\sf SIR}_{\rm dB} \approx {\sf SINR}_{\rm dB}$ for strong interferences, we have obtained (\ref{eq.SQNRdb}).

Given a strong  interference, e.g., ${\sf SIR}_{\rm dB} = -80$,  a ADC with ENOB $12$bit will yield SQNR only $-12.1$dB. Thus, the PPSINR formula (\ref{sinrOUT}) would be too optimistic given the finite-bit ADCs..

To have a larger ENOB can help, but with very steep increase of the ADC power consumption -- 4x increase per extra bit \cite{7258468}. Thus, the strong interferences have to be mitigated in both analog and digital domain, i.e., via hybrid interference mitigation.

\subsection{Interference Mitigation via Spatial Prewhitening}
We propose to split the MMSE beamformer (\ref{eqwmmse}) into two parts:
\ben
\wbf = \left(\Rbf_y^{-1/2}\right)\left(\sigma_x^2 \Rbf_y^{-1/2} \hbf\right).
\een
The first term $\Rbf_y^{-1/2}$ corresponds to the analog spatial prewhitening before the ADCs, while the second term $(\sigma_x^2 \Rbf_y^{-1/2} \hbf)$ amounts to the digital beamforming after the ADCs.
$\Rbf_y^{-1/2}$ should satisfy
\ben
\Rbf_y^{-1/2}\Rbf_y(\Rbf_y^{-1/2})^{H} = \Ibf.
\een
Thus $\Rbf_y^{-1/2}$ can be computed as
\ben
\Rbf_y^{-1/2} = \Qbf\Sigmabf^{-1/2}\Ubf^{H}
\label{equ.RnegHalf}
\een
where $\Ubf$ and $\g{\Sigma}$ are from the SVD $\Rbf_y=\Ubf\Sigmabf\Ubf^{H}$, and $\Qbf$ can be an arbitrary unitary matrix; thus, $\Rbf_y^{-1/2}$ is not unique.

\begin{Lemma} \label{lemma1}
 The signal to interference-plus-noise ratio (SINR) of the input to the ADCs after the prewhitener $\Rbf_y^{-1/2}$ is
\ben \label{eqrhoprewhite}
\rho_{\rm prewhite}  = \frac{\frac{\rho_{\rm mmse}}{1+\rho_{\rm mmse}}}{M-\frac{\rho_{\rm mmse}}{1+\rho_{\rm mmse}}},
\een
where $\rho_{\rm mmse}=
\sigma_x^2\hbf^H \Cbf_z^{-1} \hbf$ is as given in (\ref{sinrOUT}).
\end{Lemma}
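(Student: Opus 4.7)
The plan is to compute the total signal power and the total interference-plus-noise power at the output of the analog prewhitener $\Rbf_y^{-1/2}$ and take their ratio. Since the prewhitener renders the $M$ streams spatially uncorrelated, I interpret ``SINR at the input to the ADCs'' as the ratio of the aggregate signal covariance trace to the aggregate interference-plus-noise covariance trace; this quantity is unitarily invariant and therefore independent of the arbitrary unitary factor $\Qbf$ in (\ref{equ.RnegHalf}).

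First I would write the prewhitened vector as
\ben
\tilde{\ybf}(t) \;=\; \Rbf_y^{-1/2}\hbf\, x(t) \,+\, \Rbf_y^{-1/2}\zbf(t),
\een
and observe that, by construction, its total covariance equals $\Rbf_y^{-1/2}\Rbf_y(\Rbf_y^{-1/2})^{H} = \Ibf$. Hence the trace of the signal covariance plus the trace of the interference-plus-noise covariance sums to $M$. This gives the denominator for free once the numerator (signal power) is known.

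Second I would compute the aggregate signal power,
\ben
\tr\!\left(\sigma_x^2 \Rbf_y^{-1/2}\hbf\hbf^H(\Rbf_y^{-1/2})^H\right) \;=\; \sigma_x^2\,\hbf^H \Rbf_y^{-1}\hbf,
\een
and invoke the matrix inversion lemma already stated in (\ref{eqRy-1}) to write
\ben
\hbf^H \Rbf_y^{-1}\hbf \;=\; \frac{\hbf^H \Cbf_z^{-1}\hbf}{1+\sigma_x^2\,\hbf^H \Cbf_z^{-1}\hbf}.
\een
Multiplying by $\sigma_x^2$ and recognising $\rho_{\rm mmse}=\sigma_x^2\hbf^H\Cbf_z^{-1}\hbf$ from (\ref{sinrOUT}) yields the aggregate signal power $\rho_{\rm mmse}/(1+\rho_{\rm mmse})$. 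The aggregate interference-plus-noise power is then $M-\rho_{\rm mmse}/(1+\rho_{\rm mmse})$ by the trace identity of the first step, and forming the ratio gives the claimed expression for $\rho_{\rm prewhite}$.

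There is no real obstacle; the argument is essentially one application of the matrix inversion lemma combined with a trace bookkeeping identity. The only conceptual subtlety is the definition of SINR for a vector-valued output, which I would resolve explicitly at the start of the proof so that the answer is manifestly independent of the non-unique square root $\Rbf_y^{-1/2}$.
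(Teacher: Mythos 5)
Your proposal is correct and follows essentially the same route as the paper: both compute the signal power as $\sigma_x^2\hbf^H\Rbf_y^{-1}\hbf$, obtain the interference-plus-noise power as $M$ minus that quantity (your ``total covariance is the identity'' observation is exactly the paper's step $\tr(\Rbf_y^{-1}\Cbf_z)=\tr(\Rbf_y^{-1}(\Rbf_y-\sigma_x^2\hbf\hbf^H))=M-\sigma_x^2\hbf^H\Rbf_y^{-1}\hbf$), and then apply the matrix inversion lemma (\ref{eqRy-1}) to express $\sigma_x^2\hbf^H\Rbf_y^{-1}\hbf$ as $\rho_{\rm mmse}/(1+\rho_{\rm mmse})$. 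Your explicit remark that the trace-based SINR is invariant to the unitary factor $\Qbf$ is a small clarification the paper leaves implicit, but the substance of the argument is identical.
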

\proof The prewhitening renders the signal array response to be $\Rbf_y^{-1/2} \hbf$, and the lump-sum of the interference-plus-noise to be $\Rbf_y^{-1/2} \zbf$. By the definition of SINR
\begin{align}
\rho_{\rm prewhite} &= \frac{\|\Rbf_y^{-1/2}\hbf \|^2 \sigma_x^2}{ \tr (\Rbf_y^{-1/2} \Cbf_z \Rbf_y^{-1/2})}= \frac{\hbf^H \Rbf_y^{-1}\hbf \sigma_x^2}{ \tr (\Rbf_y^{-1} \Cbf_z)} \label{eq.pre-ADC}\\
 &= \frac{\hbf^H \Rbf_y^{-1}\hbf \sigma_x^2}{ \tr (\Rbf_y^{-1}(\Rbf_y -  \hbf\hbf^H \sigma_x^2))} \nonumber \\
&=\frac{\hbf^H \Rbf_y^{-1}\hbf \sigma_x^2}{M-\hbf^H \Rbf_y^{-1}\hbf \sigma_x^2}. \label{eqrhoprewhite2}
\end{align}
Using (\ref{eqRy-1}), we have
\begin{align}
\hbf^H \Rbf_y^{-1}\hbf\sigma_x^2 &= \sigma_x^2\hbf^H\Cbf_z^{-1}\hbf - \frac{\sigma_x^2|\hbf^H\Cbf_z^{-1}\hbf |^2}{\sigma_x^{-2} + \hbf^H \Cbf_z^{-1} \hbf} \nonumber\\
&=\frac{\sigma_x^2\hbf^H\Cbf_z^{-1}\hbf }{1+\sigma_x^{2}\hbf^H \Cbf_z^{-1} \hbf} = \frac{\rho_{\rm mmse}}{1+\rho_{\rm mmse}}. \label{eqrhoprewhite3}
\end{align}
Combining (\ref{eqrhoprewhite3}) and (\ref{eqrhoprewhite2}) yields (\ref{eqrhoprewhite}).
\endproof
From Lemma \ref{lemma1} we see that $\rho_{\rm prewhite} \approx \frac{1}{M-1}$ if $\rho_{\rm mmse} \gg 1$. Hence, the prewhitening amounts to mitigating the strong interference in the spatial domain, which leads to much reduced ADC quantization noise according to (\ref{eq.SQNRdb}).


%
%

Another way to see why prewhitening can reduce the ADC quantization noise is as follows. In the presence of strong interferences, the received waveforms at different antennas are highly correlated, leading to a waste of ADC bit resolutions. The prewhitening decorrelates the waveforms received by the different antennas, leading to more efficient usage of the ADC bits.

A major advantage of prewhitening is that it only requires the covariance matrix estimate $\hat{\Rbf}_y$, while the MMSE method in \cite{VenkateswaranVeen2010} also needs to know $\hbf$, which is difficult in the presence of strong interferences.

\subsection{Phase Shifter Network}
It is difficult to realize the ideal prewhitener $\Rbf_y^{-1/2}$ using analog circuits. We propose to employ an $M\times M$ PSN in between the antenna ports and the ADCs to approximate the ideal prewhitener. The PSN can be represented by an $M\times M$ matrix,
\ben \label{EphiStruc}
\Ebf(\Phi) = \left(\begin{matrix}
1 & e^{j\phi_{1,2}}& \ldots & e^{j\phi_{1,M}} \\
e^{j\phi_{2,1}} & 1 & \ldots & e^{j\phi_{2,M}}\\
\vdots & \ldots & \ddots &\vdots\\
e^{j\phi_{M,1}} & \ldots & e^{j\phi_{M,M-1}} & 1
\end{matrix}\right),
\een
which consists of $M^2-M$ phase shifters. As an illustrative example, Figure \ref{fig.psn} shows a PSN with $M=2$, wherein the switches are introduced so that the signals can bypass the PSN when the switches are off.

\begin{figure}[h!]
\centering
\includegraphics[width=3.5in]{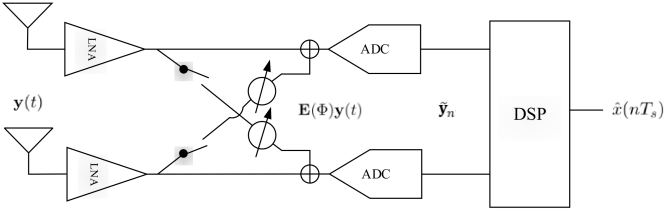}
\caption{Proposed phase shifter network at receiver (a 2x2 example)}
\label{fig.psn}
\end{figure}

When the switches are on, the output of the PSN is $\Ebf(\Phi) \ybf(t)$ and the output of the ADCs is  \cite{7307134}\cite{4407763}
\bena \label{eq.ybfn}
\tilde{\ybf}_n &\triangleq &{\cal Q}[\Ebf(\Phi)\ybf(nT_s)] \nonumber\\
 &=& (1-\rho)\Ebf(\Phi) \ybf(nT_s)+\qbf_n, \; n=1,2,\ldots,
\eena
where $\rho$ approximates to zero when the ADC bit resolution is larger than 4bit. Hence, we assume $\rho = 0$ in the rest of this paper. Note the difference between $\tilde{\ybf}_n$ in (\ref{eq.ybfn}) and $\ybf_n$ in (\ref{eqyn}): the PSN is engaged in the former case, it is bypassed in the latter.

To approximate the ideal prewhitener $\Rbf_y^{-1/2}$ using the PSN, it is natural to consider the following formulation [cf. (\ref{equ.RnegHalf})]
\begin{align}
\nonumber &\mathop{\min}_{\Phi, \Qbf}\ {||\Ebf(\Phi)-\Qbf\Sigmabf^{-1/2}\Ubf^{H}||^{2}_{F}} \\
&\text{subject to} \quad \phi_{mn} \in \Omega,\;  1\le m, n \le M
\label{costFuncOld}\\ \nonumber
& \hspace{1.7cm} \Qbf\Qbf^H = \Qbf^H\Qbf = \Ibf,
\end{align}
where the set $\Omega$ relies on the PS bit resolution:
\ben \label{eq.Omega}
\Omega = \left\{
\ba{ll} \left\{\frac{2k\pi}{2^{b}}, k = 0,\ldots, 2^b-1 \right\}, & \text{PS has $b$ bit resolution} \\
\left\{x \in {\mathbb R} | 0\le x < 2\pi\right\}, & \text{PS has $\infty$ resolution}.
\ea \right.
\een
We can solve (\ref{costFuncOld}) using an alternating method which is detailed in Appendix A. This formulation is intuitive and is introduced as a performance benchmark, although its performance is unsatisfactory as will be shown in the simulation.

\section{The HIMAP Scheme} \label{SEC3}

In this section, we first present an overview of the HIMAP scheme as a five-step procedure, before explaining in detail the two crucial steps: the PSN optimization algorithm and a constant false alarm rate (CFAR) detector.

\subsection{The Five Steps of HIMAP} \label{secFiveStep}
{\em Step 1}, turn off the PSN switches as illustrated in Figure \ref{fig.psn} so that the PSN is bypassed. Use $L_1$ samples of the ADC output to estimate the covariance matrix
\ben \hat{\Rbf}_y = \frac{1}{L_1} \sum_{n=1}^{L_1} {\ybf}_n \ybf_n^H, \label{eq.Ry1}
\een
where  ${\ybf}_n$ is the same as defined in (\ref{eqyn}).

In practice the quantization ${\cal Q}[\cdot]$ may render $\hat{\Rbf}_y$ rank-deficient, for which one remedy is to regularize $\hat{\Rbf}_y$ as
\ben
\hat{\Rbf}_y \leftarrow \hat{\Rbf}_y + \frac{\Delta^2}{12}\Ibf,
\label{equ.reg}
\een
where $\Delta$ is the quantization interval  of the ADCs and $\frac{\Delta^2}{12}$ is the variance of the quantization noise.


{\em Step 2}, optimize the prewhitening PSN matrix $\Ebf(\Phi)$ based on $\hat{\Rbf}_y$. It is the key step of the HIMAP and will be detailed in Section \ref{sec3.1}.

{\em Step 3}, set the phases of the PSN as obtained in {\em Step 2}, and turn on the switches to engage the PSN for analog prewhitening.\footnote{Here we omit the engineering detail that the input signals should be gain-controlled to match the dynamic range of the ADCs.}

{\em Step 4}, detect and synchronize the preamble of the frame now that the interferences are suppressed by the prewhitener. Here we adopt a CFAR detection method, which is to be explained in Section \ref{sec3.2}.

{\em Step 5}, compute the digital MMSE beamforming weight. After the synchronization, we can utilize the preamble sequence $\{ x_n, n=1,\ldots, L_2\}$ to estimate the effective array response
\ben
\tilde{\hbf}_{\rm eff} = \frac{1}{\sum_{n=1}^{L_2} |x_n|^2}\sum_{n=1}^{L_2} \tilde{\ybf}_n x^*_n \label{eq.heff}
\een
and the covariance matrix
\ben \tilde{\Rbf}_y = \frac{1}{L_2} \sum_{n=1}^{L_2} \tilde{\ybf}_n \tilde{\ybf}_n^H, \label{eq.Ry2}
\een
where $\tilde{\ybf}_n$ is the ADC output as given in (\ref{eq.ybfn}). Here the preamble sequence has already been synchronized.


The digital MMSE beamforming is [cf. (\ref{eqwmmse})]
\ben \label{wmmse}
\wbf_{\rm mmse} = \sigma_x^2 \tilde{\Rbf}_y^{-1} \tilde{\hbf}_{\rm eff},
\een
where $\sigma_x^2 = \frac{1}{L_2} \sum_{n=1}^{L_2} |x_n|^2 $.
Applying $\wbf_{\rm mmse}$ to the payload after the preamble, we can obtain
\ben
\tilde{x}_n = \wbf_{\rm mmse}^H \tilde{\ybf}_n, n = L_2+1, L_2+2,\dots,
\een
where the interferences have been suppressed.

\subsection{Optimization of PSN $\Ebf(\Phi)$} \label{sec3.1}
The entries of $\Ebf(\Phi)$ are constrained to unit modulus. As an ideal prewhitener should be proportional to $\hat{\Rbf}_y^{-1/2}$ to render the covariance matrix of the ADC inputs be a (scaled) identity matrix, the PSN-based prewhitening is usually non-ideal. But we attempt to optimize the PSN so that its output is spatially as white as possible.

When the switches are on, the covariance matrix of received signals is assumed as $\Ebf(\Phi)\hat{\Rbf}_y\Ebf(\Phi)$, and $\hat{\Rbf}_y$ is the covariance matrix estimated when the switches are off according to (\ref{eq.Ry1}) and (\ref{equ.reg}) in {\em step 1}. We propose to study the following optimization problem:
\begin{align}
\nonumber &\mathop{\max}_{\Phi}\ \frac{|\Ebf(\Phi)\hat{\Rbf}_y\Ebf(\Phi)^H|^{\frac{1}{M}}}{\tr(\Ebf(\Phi)\hat{\Rbf}_y\Ebf(\Phi)^H)/M} \\
&\text{subject to} \quad \phi_{ij} \in \Omega,\;  1\le i\ne j \le M \label{costFuncOri}\\
& \hspace{1.7cm} \phi_{ii} = 0, \; 1\le i \le M, \nonumber
\end{align}
where $|\cdot|$ represents the matrix determinant. The objective function, denoted as $\alpha(\Phi)$, is the ratio of the geometric mean to the algorithmic mean of the eigen-values of the matrix $\Ebf(\Phi) \Rbf_y \Ebf(\Phi)^H$. The higher the objective function, the more even the eigen-values are, and the less correlated the vector signal at the output of the PSN. Indeed, $\alpha(\Phi) \le 1$ with equality holds if and only if the eigen-values of $\Ebf(\Phi) \Rbf_y \Ebf(\Phi)^H$ are equal, i.e, it is a scaled identity matrix. That is, $\alpha(\Phi)$ is maximized when the exact prewhitening is achieved.

We can safely ignore the constraints $\phi_{ii} = 0, i=1,\ldots, M$ in (\ref{costFuncOri}), because replacing $\Ebf(\Phi)$ by ${\rm diag}(e^{-j\phi_{11}},\ldots, e^{-j\phi_{MM}}) \Ebf(\Phi)$ does not affect the objective function in (\ref{costFuncOri}).
We also have $|\Ebf(\Phi)\hat{\Rbf}_y\Ebf(\Phi)^H| = |\Ebf(\Phi)\Ebf(\Phi)^H||\hat{\Rbf}_y|$ since $\Ebf(\Phi)$ is square. Based on the above two observations, we can simplify (\ref{costFuncOri}) to be
\ben \label{eq.mainfunc}
\mathop{\max}_{\Phi \in \Omega^{M\times M}} \frac{|\Ebf(\Phi)\Ebf(\Phi)^H|^{\frac{1}{M}}}{\tr(\Ebf(\Phi)\hat{\Rbf}_y\Ebf(\Phi)^H)}.
\een

We present an iterative algorithm to solve the non-convex problem (\ref{eq.mainfunc}). Initialize the PSN with random phases $\Phi_0 \in \Omega^{M\times M}$. In the $i$th iteration, we sort $\Ebf(\Phi_{i})$ to be
\begin{equation}
\Ebf(\Phi_{i})=
\left[
\begin{matrix}
\xbf^H(\g{\phi}_l) \\
\bar{\Ebf}
\label{partitionequation}
\end{matrix}
\right],
\end{equation}
where the initial value of $\xbf^H(\g{\phi}_l)$ is $l$th row vector of $\Ebf(\Phi_{i-1})$ with $l=((i-1)\bmod M)+1$,
and the rest rows of $\Ebf(\Phi_{i-1})$ form $\bar{\Ebf}\in {\mathbb C}^{(M-1)\times M}$.

Given the partitioning, we have
\begin{equation}
\Ebf(\Phi_{i}) \Ebf(\Phi_{i})^H =
\left[
\begin{matrix} M & \xbf^H(\g{\phi}_l) \bar{\Ebf}^H \\
\bar{\Ebf} \xbf(\g{\phi}_l) & \bar{\Ebf} \bar{\Ebf}^H
\end{matrix}
\right].
\end{equation}
It follows that
\bena \label{eq.numer} &&|\Ebf(\Phi_{i}) \Ebf(\Phi_{i})^H| \\
&=&|\bar{\Ebf}\bar{\Ebf}^H|(M-
\xbf^H(\g{\phi}_l)\bar{\Ebf}^H(\bar{\Ebf}\bar{\Ebf}^H)^{-1}
\bar{\Ebf}\xbf(\g{\phi}_l)) \nonumber \\
&=&|\bar{\Ebf}\bar{\Ebf}^H| \xbf^H(\g{\phi}_l) (\Ibf-
\bar{\Ebf}^H(\bar{\Ebf}\bar{\Ebf}^H)^{-1}
\bar{\Ebf}) \xbf(\g{\phi}_l), \nonumber
\eena
while the denominator of (\ref{eq.mainfunc}) is
\ben \label{eq.denom}
\xbf^H(\g{\phi}_l) \hat{\Rbf}_y \xbf(\g{\phi}_l) + \tr(\bar{\Ebf}\hat{\Rbf}_y\bar{\Ebf}^H).
\een

Fixing the $\bar{\Ebf}$ and inserting (\ref{eq.numer}) and (\ref{eq.denom}) into (\ref{eq.mainfunc}) yields
\begin{equation}
\mathop{\max}_{\g{\phi} \in {\mathbb R}^M}\frac{[\xbf^H(\g{\phi_l})\Abf\xbf(\g{\phi_l})]^{\frac{1}{M}}}{\xbf^H(\g{\phi_l})\textbf{B}\xbf(\g{\phi_l})}
\label{costFuncSec}
\end{equation}
where
\begin{align} \label{eq.A}
&
\Abf \triangleq \Ibf-
\bar{\Ebf}^H(\bar{\Ebf}\bar{\Ebf}^H)^{-1}
\bar{\Ebf}, \\\label{eq.B}
&\textbf{B} \triangleq \hat{\Rbf}_y + \frac{\tr(\bar{\Ebf}\hat{\Rbf}_y\bar{\Ebf}^H)}{M}\textbf{I}.
\end{align}

Splitting $\xbf(\g{\phi}_l)$ as
\begin{equation}
\xbf(\g{\phi}_l) =
\left[
\begin{matrix}
0 \\
\vdots \\
0 \\
e^{j\phi_{l,n}} \\
0 \\
\vdots \\
0
\end{matrix}
\right]
+
\left[
\begin{matrix}
e^{j\phi_{l,1}} \\
\vdots \\
e^{j\phi_{l,n-1}} \\
0 \\
e^{j\phi_{l,n+1}} \\
\vdots \\
e^{j\phi_{l,M}}
\end{matrix}
\right]
= e^{j\phi_{l,n}}\ebf_n+\bar{\xbf}_{l,n},
\end{equation}
we can rewrite the cost function (\ref{costFuncSec}) as
\begin{equation}
\mathop{\max}_{\phi_n}\frac{[\ebf_n^H\Abf\ebf_n+2\Re(e^{-j\phi_{l,n}}\ebf_n^H\Abf\bar{\xbf}_{l,n})+\bar{\xbf}_{l,n}^H\Abf\bar{\xbf}_{l,n}]^{\frac{1}{M}}}
{\ebf_n^H\textbf{B}\ebf_n+2\Re(e^{-j\phi_{l,n}}\ebf_n^H\textbf{B}\bar{\xbf}_{l,n})+\bar{\xbf}_{l,n}^H\textbf{B}\bar{\xbf}_{l,n}},
\label{costFuncThi}
\end{equation}
which can be deduced as
\begin{equation}
\max_{\phi_{l,n}}g(\phi_{l,n})\triangleq\frac{[\alpha+r_1\text{cos}(\phi_{l,n}-\varphi_1
)]^{\frac{1}{M}}}{\beta+r_2\text{cos}(\phi_{l,n}-\varphi_2)}
\label{costFuncfor}
\end{equation}
where
\begin{equation}
\begin{split}
\alpha = a_{nn} + \bar{\xbf}_{l,n}^H\Abf\bar{\xbf}_{l,n}, &\quad \beta = b_{nn} + \bar{\xbf}_{l,n}^H\textbf{B}\bar{\xbf}_{l,n} \\
r_1 = 2|\ebf_n^H\Abf\bar{\xbf}_{l,n}|, &\quad r_2 = 2|\ebf_n^H\textbf{B}\bar{\xbf}_{l,n}| \\
\varphi_1 = \angle{\ebf_n^H\Abf\bar{\xbf}_{l,n}}, &\quad
\varphi_2 = \angle{\ebf_n^H\textbf{B}\bar{\xbf}_{l,n}}.
\end{split}
\end{equation}
Here $\angle(\cdot)$ stands for the phase of a complex number.

We have a closed-form solution to (\ref{costFuncfor}) for both finite and $\infty$ resolution of the phase shifters as is detailed in Appendix B.

Iterating through $n$ in the $i$th iteration and solving (\ref{costFuncfor}), we can improve the objective function of (\ref{costFuncOri}) monotonously until convergence, which will yield a prewhitening PSN $\Ebf(\Phi)$ as a (suboptimal) solution to (\ref{costFuncOri}). The optimality is not guaranteed owing to the non-convexity of the problem.

The whole procedure for solving (\ref{costFuncOri}), i.e., {\em Step} 2 of the HIMAP scheme is summarized in Algorithm 1.
\begin{algorithm}[ht]
\caption{The proposed algorithm for solving (\ref{costFuncOri})}
\begin{algorithmic}[1]
\REQUIRE The estimated covariance matrix $\hat{\Rbf}_y\in\mathbb{C}^{M \times M}$
\ENSURE The prewhitening PSN phases $\Phi \in \Omega^{M\times M}$
\STATE Initialize $\Ebf(\Phi)$ using some random phases.
\WHILE{the cost function in (\ref{eq.mainfunc}) still improves}
\FOR{$l = 1 : M$}
\STATE Take out the $l$th row of $\Ebf(\Phi)$ denoted as $\xbf^H(\phi)$ and  fix the other rows denoted as $\bar{\Ebf}^H$.
\STATE Compute (\ref{eq.A}) and (\ref{eq.B}).
\FOR{$n = 1 : M$}
\STATE Solve  $\phi_{l,n}$ in (\ref{costFuncfor}) according to the Appendix as the phase of the $(l,n)$th entry of $\Ebf$.
\ENDFOR
\ENDFOR
\ENDWHILE
\FOR{$l = 1 : M$}
\STATE $\Phi(l,:) = \Phi(l,:) -\phi_{l,l},$\footnotemark
\ENDFOR
\end{algorithmic}
\end{algorithm}

The computational complexity of Algorithm 1 is dominated by (\ref{eq.A}) and (\ref{eq.B}), both have complexity $O(M^3)$. But in each iteration we can update  $\tr(\bar{\Ebf}\hat{\Rbf}_y\bar{\Ebf}^H)$ as
\begin{equation} \tr(\bar{\Ebf}\hat{\Rbf}_y\bar{\Ebf}^H)-\xbf(\phibf_i)\hat{\Rbf}_y\xbf(\phibf_i)^H+\xbf(\phibf_j)\hat{\Rbf}_y\xbf(\phibf_j)^H,
\end{equation}
where $\xbf(\phibf_i)$ is the $i$th row of $\bar{\Ebf}$; hence the computational complexity of (\ref{eq.B}) is only $O(M^2)$. We can also simplify the update of (\ref{eq.A}) to only $O(M^2)$ flops by exploiting its low-rank property.

\footnotetext{Here we normalize the diagonal elements of $\Ebf(\Phi)$ to be $1$ according to (\ref{EphiStruc}).}
\subsection{The CFAR Detection}\label{sec3.2}
Now that the strong interferences have been significantly mitigated by the PSN prewhitener, the preamble can be synchronized in the digital domain. We use the detection and synchronization method from \cite{7880688}.

After the prewhitening PSN, the ADC outputs are $\tilde{\ybf}_n,n=1,2,\dots,N$. Detecting the presence of the synchronization sequence $\{x_n, n=1,\ldots, L_2\}$ is a hypothesis testing procedure. Here we choose \cite[\it Lemma 1]{7880688}
\ben \theta_p \triangleq \frac{\rbf(p)^H\Rbf(p)^{-1}\rbf(p)}{\sum_{l=1}^{L_2} |x_l|^2}\een
 as the testing metric with $p=0,1,\dots,N-L_2$ representing the time index.
\ben \label{equ.correla}
\begin{split}
\rbf(p) &= \sum_{n=1}^{L_2} \tilde{\ybf}_{p+n}x^*_n \in\Cnum^{M \times 1}
\end{split}
\een
is the cross-correlation  of the received sequence and the preamble sequence, which can be efficiently computed via fast fourier transform (FFT). The inverse of the covariance
\ben \label{autocorrelation}
\begin{split}
\Rbf(p) &= \sum_{n=1}^{L_2} \tilde{\ybf}_{p+n}\tilde{\ybf}^H_{p+n}
\end{split}
\een
can be computed recursively from $\Rbf^{-1}(p-1)$ using matrix inversion lemma since
\ben
\Rbf(p)=\Rbf(p-1)-\tilde{\ybf}_p\tilde{\ybf}_p^H
+\tilde{\ybf}_{p+L_2}\tilde{\ybf}_{p+L_2}^H.
\een

In absence of the preamble signal, it can be proven that $\theta$ is of $Beta(M,L_2-M)$ distribution \cite[\it Lemma 1]{7880688}
\ben
f_{\theta}(x)=\frac{(L_2-1)!}{(L_2-M-1)!(M-1)!}x^{M-1}(1-x)^{L_2-M-1},x\ge0.
\een
Indeed, the distribution does not depend on the variance of the interference-plus-noise. Hence, we can construct a CFAR detector by comparing $\theta_p$ with a threshold $\Gamma$, which is set according to a target false alarm rate (FAR) as
\ben
Pr(\theta\ge\Gamma)=\int_{\Gamma}^{\infty}f_{\theta}(x)dx = {\sf FAR}.
\een
If $\theta_{\bar{p}}\ge\Gamma$, then we can deem that the signal is in presence starting at time $\bar{p}+1$. We can further search $p$ around $\bar{p}$ to find a local maximum point as
\ben
p_{\rm sync} =  \mbox{arg}\max_{p\in [\bar{p}, \bar{p}+Q]} \theta_p,
\een
where $Q \ge 0$ is a chosen parameter.

Once the synchronization is achieved at time shift $p_{\rm sync}$, we can derive (\ref{wmmse}) from the results of (\ref{equ.correla}) and (\ref{autocorrelation}) as
\ben \label{eqwmmse2}
\hat{\wbf}_{\rm mmse}=\frac{1}{\frac{1}{L_2}\sum_{n=1}^{L_2}|x_n|^2}\Rbf^{-1}(p_{\rm sync})\rbf(p_{\rm sync}),
\een
which effectively completes both {\em Step 4} and {\em Step 5} of the HIMAP scheme.

To conclude this section, we note that the real PPSINR is quite different from $\rho_{\rm mmse}$ as given in (\ref{sinrOUT}) and should be computed as follows
\ben \label{sinrOUT2}
\rho_{\rm real} = \frac{|\hat{\wbf}^H_{\rm mmse} \Ebf(\Phi) \hbf |^2 \sigma_x^2} { \hat{\wbf}^H_{\rm mmse} \hat{\Qbf} \hat{\wbf}_{\rm mmse}}
\een
where $\hat{\wbf}_{\rm mmse}$ is as given in (\ref{eqwmmse2}) and
\ben
\hat{\Qbf} = {\mathbb E}[(\tilde{\ybf}_n -  \Ebf(\Phi) \hbf x(nT_s))(\tilde{\ybf}_n -  \Ebf(\Phi) \hbf x(nT_s))^H].
\een
In the simulations we use (\ref{sinrOUT2}) as the PPSINR metric.

\section{Numerical Examples} \label{SEC4}
In this section, we present simulation results to verify the effectiveness of the proposed HIMAP scheme. All the simulations are based on a receiver that has an $M$-element uniform linear array (ULA) with inter-element distance $d=\frac{\lambda}{2}$. While a signal with ${\sf SNR} = 25$dB impinges from the direction of arrival (DOA) $\theta_s = 0^\circ$, $K$ interferences impinge from $\theta_1,\dots, \theta_K$. For all the simulation except for the last one, assume line-of-sight (LOS) channel where the array response of the signal is
\begin{equation}
\hbf = \abf(\theta_s) \triangleq [1,e^{-j\pi \sin\theta_s},\dots,e^{-j(M-1)\sin\theta_s}]^T;
\label{equ.array}
\end{equation}
the last example simulates a Rayleigh fading channel. Throughout the simulations, $L_1=100$ samples are used for covariance estimation [cf. (\ref{eq.Ry1})] before prewhitening and another $L_2 = 100$ samples for estimating the effective channel vector and the covariance after prewhitening [cf. (\ref{eq.heff}) and (\ref{eq.Ry2})]. The ADC resolution is 12-bit unless stated otherwise.

The first simulation compares the performance of the two PSN prewhiteners---one is based on (\ref{costFuncOld}) and the other on (\ref{costFuncOri})---under two simulation settings: $M = 4$, $K = 1$ and $M = 4$, $K = 2$. One interference is from the angle  $\theta_1 = 60^\circ$; for $K=2$, the other interference is from $\theta_2=30^\circ$. As shown in Figure 2, the $\Ebf(\Phi)$ obtained by optimizing (\ref{costFuncOld}) yield output SINR (i.e.,pre-ADC SINR) significantly lower than that of the $\Ebf(\Phi)$ obtained according to (\ref{costFuncOri}). Therefore, we focus on the HIMAP scheme based on (\ref{costFuncOri}) in the remainder of this section.

\begin{figure}[htb]
\centering
\includegraphics[width=3.2in]{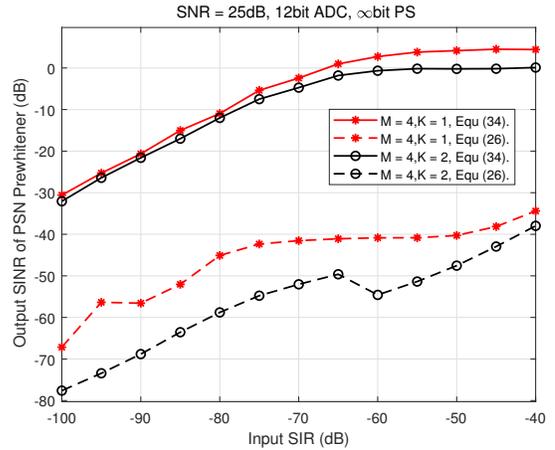}
\caption{PSN's output SINR-vs-input SIR performance of $\Ebf(\Phi)$ generated by  optimizing (\ref{costFuncOld}) and (\ref{costFuncOri}) with $M=4,K=1$ and $M=4,K=2$. ADC resolution = 12bit.}
\label{fig.comparison}
\end{figure}

In the second example, an interferences 70dB stronger than the signal impinges from angle $\theta_1=30^\circ$ on the two-element ULA ($M=2$). Based on the estimated covariance $\hat{\Rbf}$ as given in (\ref{eq.Ry1}), we run Algorithm 1 in Section \ref{sec3.1}. Figure \ref{fig.coverge1} shows the convergence of the proposed iterative algorithm for the PSN with different bit resolutions, where the y-axis is the objective function in (\ref{costFuncOri}) and one iteration represents the update of one row of $\Ebf(\Phi)$. We see that for the PSN with infinity resolution, the proposed algorithm can reach the global optimum within 10 iterations.
But the convergence to a global optimum  is not guaranteed for a general $M$. Figure \ref{fig.coverge2} illustrates the case where $M=4$, and two interfering sources are from $\theta_1 = 30^\circ$ and $\theta_2 = 60^\circ$, respectively. It shows that Algorithm 1 with different initializations converges to different local optimums.

\begin{figure}[htb]
\centering
\includegraphics[width=3.2in]{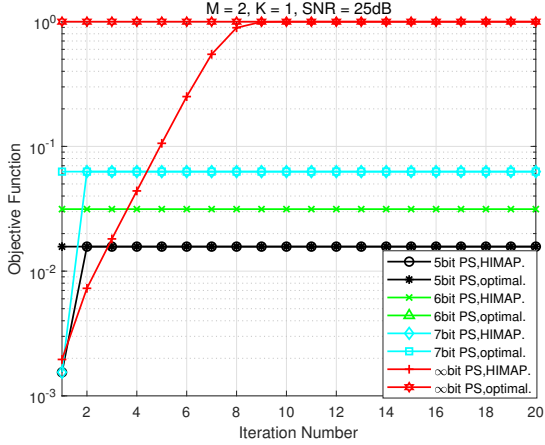}
\caption{Convergence  of objective function (\ref{costFuncOri}) when running Algorithm 1, $M=2$, $K=1$, ADC resolution = 12bit, input ${\sf SIR}=-70\text{dB}$.}
\label{fig.coverge1}
\end{figure}

\begin{figure}[htb]
\centering
\includegraphics[width=3.2in]{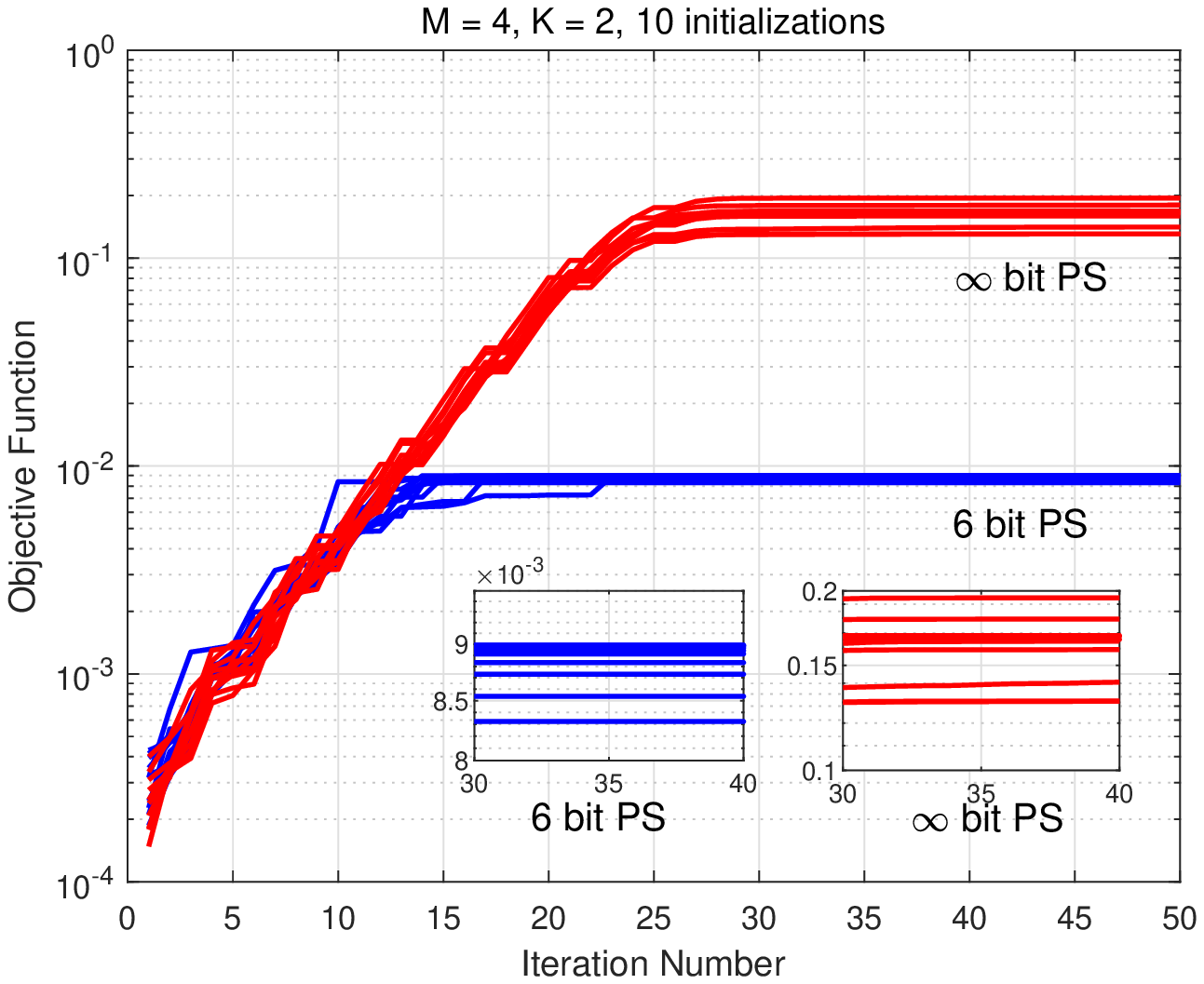}
\caption{Convergence  of objective function of (\ref{costFuncOri}) when running Algorithm 1, $M=4$, $K=2$,  ADC resolution = 12bit, input ${\sf SIR}=-70\text{dB}$.}
\label{fig.coverge2}
\end{figure}

The third example simulates the same scenario as Figure \ref{fig.coverge1} but with varying input SIR. Figure \ref{fig.SINRbfADC} shows the SINR of the input into the ADC [cf. (\ref{eq.pre-ADC})]
$$
\rho_{\rm prewhiten} = \frac{\|\Ebf(\Phi)\abf(\theta_s)\|^2 10^{2.5}}{\tr\{\Ebf(\Phi)[\abf(\theta_1) \abf^H(\theta_1) 10^{\frac{\gamma}{10}}+ \Ibf] \Ebf(\Phi)\}}
$$
where $\gamma = 25-{\sf SIR}_{\rm dB}$ is interference-to-noise ratio (INR) in dB. We optimize  the  PSN of $6$-bit, $7$-bit, and $\infty$-bit resolution, respectively. For comparison purposes, we also simulate the ideal prewhitener $\hat{\Rbf}_y^{-1/2}$. The performance of using $\infty$-bit PSN coincides with the ideal prewhitener in this particular case, which agrees with the fact that the red line {$-\hspace{-0.2em}+\hspace{-0.2em}-$} converges to 1 in Figure \ref{fig.coverge1}. The ideal prewhitening yields SINR 0dB, which agrees with the comment following Lemma \ref{lemma1} that $\rho_{\rm prewhite} \approx \frac{1}{M-1}$ if $\rho_{\rm mmse} \gg 1$. Here $M=2$, so $\rho_{\rm prewhite} \approx 1 = 0$dB. Using the off-the-shelf phase shifters of $6$-bit resolution can mitigate interferences by 25dB as shown in Figure \ref{fig.SINRbfADC}.

\begin{figure}[htb]
\centering
\includegraphics[width=3.2in]{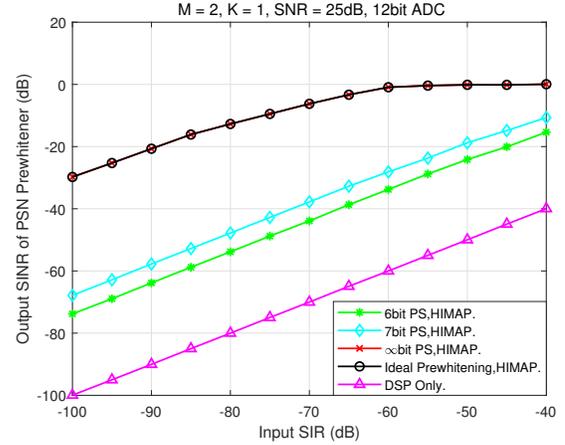}
\caption{PSN's output SINR-vs-input SIR performance of different methods.}
\label{fig.SINRbfADC}
\end{figure}


The fourth example simulates the case where the ULA has $M=4$ antennas and the $K=2$ interferences impinging from angles $\theta_1=30^\circ$ and $\theta_2=60^\circ$, respectively. Figure \ref{fig.M4K2DiffSINR} shows the PPSINR of the HIMAP with PSN of different bit resolutions under different SIR. Here PPSINR is computed based on (\ref{sinrOUT2}) rather than (\ref{sinrOUT}).
The gain of the HIMAP over the DSP-only method is prominent and is more so when using PS' of higher resolution. As a benchmark, we also include the case of $\infty$-resolution ADCs, with which the MMSE receiver yields PPSINR as given in (\ref{sinrOUT}).

\begin{figure}[tb]
\centering
\includegraphics[width=3.2in]{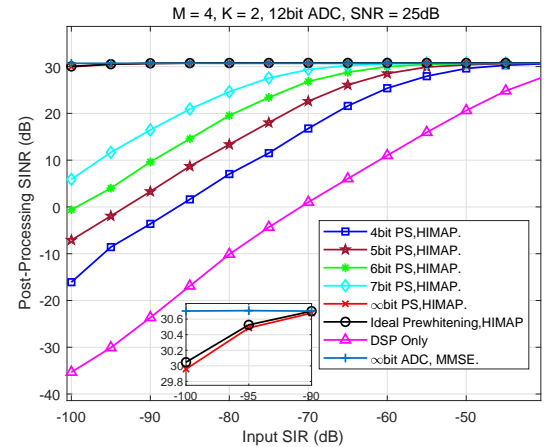}
\caption{PPSINR performance comparison as function of input SIR among the proposed method, ideal prewhitening and DSP only, and $M=4$,$K=2$.}
\label{fig.M4K2DiffSINR}
\end{figure}

%

\begin{figure}[htb]
\centering
\includegraphics[width=3.2in]{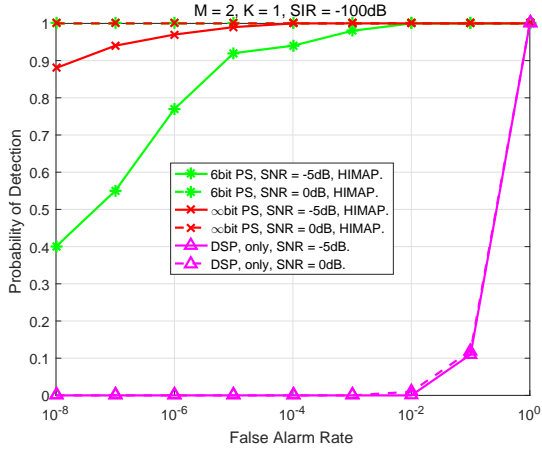}
\caption{Probability of detection-vs-false alarm rate performance at SNR 0dB and -5dB.}
\label{fig.synch}
\end{figure}
The fifth example simulates the performance of preamble detection with the input $\sf SIR=-100dB$, $M = 2, K = 1$. Figure \ref{fig.synch} shows the receiver operating characteristic (ROC) performance, i.e., the probability of detection (PD) versus false alarm rate (FAR), of the PSN-based HIMAP schemes (6-bit PSN, $\infty$-bit PSN), and the DSP-only method. Two SNR settings are used: -5dB (the solid lines) and 0dB (the dash lines). The HIMAP sees prominent improvement of the detection performance over the DSP-only method. The HIMAP scheme using either 6-bit or $\infty$-bit PSN can achieve 100\% detection at SNR=0dB as the dash lines overlaps at the $PD=1$ level, while the DSP-only method cannot because of the large quantization noise. Indeed, the SQNR is $-27$dB when the input SIR is $-100$dB according to (\ref{eq.SQNRdb}), which cannot be sufficiently compensated by the 20dB processing gain of the $L_2=100$ length preamble.

\begin{figure}[tb]
\centering
\includegraphics[width=3.2in]{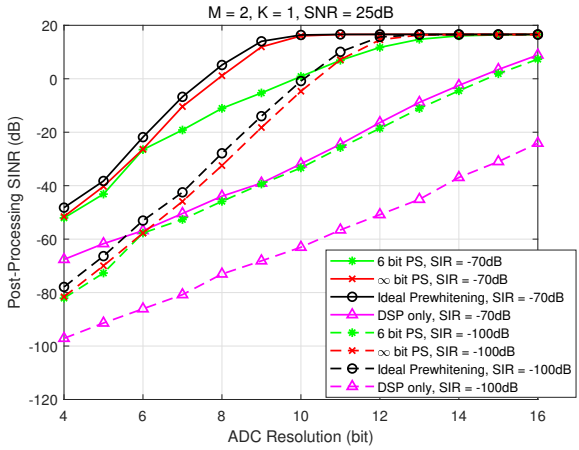}
\caption{PPSINR  of the schemes for ADCs of  different bit-resolutions.}
\label{fig.M2K1fu100fu70}
\end{figure}

The sixth example compares the PPSINR performance of the HIMAP schemes and the DSP-only method applied to the two-element ULA receiver, given ADCs of  different bit resolutions. Two SIR settings are simulated, $\sf SIR = - 70$dB and $\sf SIR = - 100$dB, which correspond to the solid lines and dash lines in Figure \ref{fig.M2K1fu100fu70}, respectively. It is shown that even with the PSN of only 6-bit resolution and the ADC of 11-bit ENOB, the HIMAP scheme can suppress 70dB interference. The simulation results show that the HIMAP using the 2x2 PSN can cut the requirement of ADC resolution by 4$\sim$5 bits, which amounts to huge reduction of power consumption. For instance, reducing the ADC bit resolution from 16 to 12 amount to $4^{16-12} = 256$x power reduction (4x per extra ENOB) \cite{7258468}. 
\begin{figure}[tb]
\centering
\includegraphics[width=3.2in]{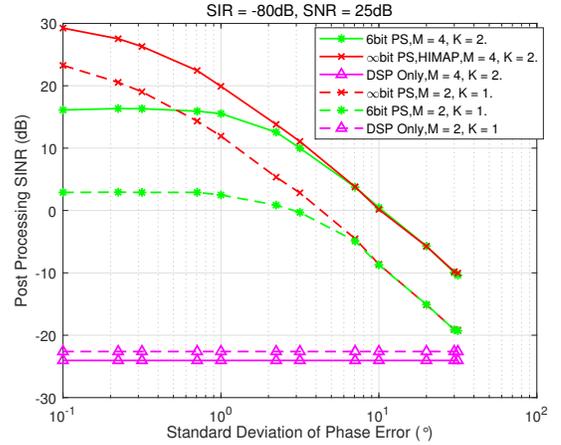}
\caption{Comparison of PPSINR  of the schemes under different ADC resolution, input ${\sf SIR}=-80\text{dB}$.}
\label{fig.M2K1PhsErr1}
\end{figure}

To evaluate impact of hardware non-perfectness, we simulate phase error of the PSN. Given the nominal phase $\phi$ of the phase shifters, we model the actual phase $\hat{\phi}$ as a Gaussian random variable $\hat{\phi} \sim N(\phi,\sigma^2)$. Figure \ref{fig.M2K1PhsErr1} shows the PPSINR performance versus the standard deviation  $\sigma$ in two scenarios: $M=2$, $K=1$ and $M=4$, $K=2$, which correspond to the dash lines and solid lines, respectively. We can see that the $\infty$-bit PSN-based HIMAP scheme is more sensitive to phase errors than the 6-bit PSN-based HIMAP scheme. The latter can tolerate phase error of standard deviation $\sigma = 1^\circ$. This result suggests that the PSN needs to be calibrated to prevent performance deterioration.

\begin{figure}[tb]
\centering
\includegraphics[width=3.2in]{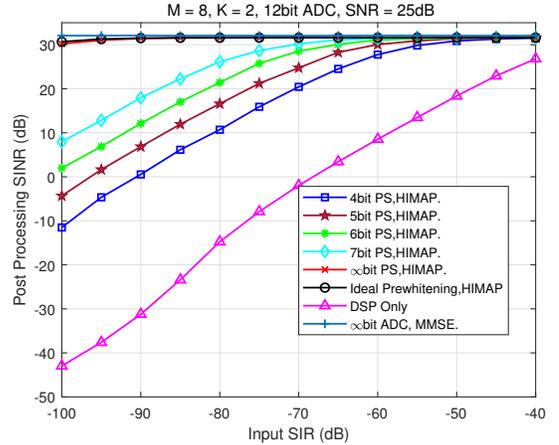}
\caption{PPSINR performance comparison as function of input SIR among the proposed method, ideal prewhitening and DSP only in Rayleigh fading channel, and $M=8$, $K=2$.}
\label{fig.rayleighFading}
\end{figure}

The last example simulates the case where both the signal and interferences are through a Rayleigh fading channel, owing to abundance of multipath scatterings. Figure \ref{fig.rayleighFading} shows the PPSINR performance of the HIMAP with PSN of different bit resolutions. Figure \ref{fig.rayleighFading} shares same settings with Figure \ref{fig.M4K2DiffSINR}, except that here the antenna number $M = 8$. The gain of the HIMAP scheme remains dramatic compared with the DSP only method, which verifies the universal feasibility of the HIMAP scheme in different channel environments.



\section{Conclusion} \label{SEC5}
In this paper, we propose a scheme named hybrid interference mitigation using analog prewhitening (HIMAP), which employs an $M\times M$ phase shifter network (PSN) inserted between the antenna ports and the analog-to-digital converters (ADC). Using only the covariance matrix estimate, the HIMAP scheme can optimizes the PSN to mitigate the interferences via spatial prewhitening, which help significantly reduce ADCs' quantization noise. Through the combination of the {\em analog} PSN  prewhitener and the {\em digital} MMSE beamformer, the HIMAP scheme can suppress strong interferences using off-the-shelf ADCs as verified by the simulations. Since the HIMAP assumes no information of the interferences, it works for both non-cooperative interferences and collocated interferences in full-duplex wireless. The simulation also shows how sensitive the scheme is to the phase errors of the PSN. The high-precision calibration of the PSN can be an interesting future research topic.

\section*{Appendix A: An Alternating Method to Solve (\ref{costFuncOld})}

First, we initialize $\Qbf$ as an arbitrary unitary matrix; then $\Ebf(\Phi)$ can be obtained as
\ben
[\Ebf(\Phi)]_{mn} = e^{j\angle{[\Cbf]_{mn}}},m,n = 1,2,\dots,M.
\een
where $\Cbf = \Qbf\Sigmabf^{-1/2}\Ubf^{H}$ and $[\cdot]_{mn}$ denotes the $(m,n)$th element of the matrix.

Second, with $\Ebf(\Phi)$ being fixed, to minimize (\ref{costFuncOld}) with respect to $\Qbf$ is an orthogonal procrustes problem (OPP), whose solution is \cite{schonemann1966a}
\ben
\Qbf = \tilde{\Vbf}\tilde{\Ubf}^H,
\een
where $\tilde{\Vbf}$ and $\tilde{\Ubf}$ are from the SVD $\Sigmabf^{-1/2}\Ubf^{H}\Ebf(\Phi)^H = \tilde{\Ubf}\tilde{\Sigma}\tilde{\Vbf}^H$. By iterating across the variables $\Ebf(\Phi)$ and $\Qbf$ until (\ref{costFuncOld}) convergence, we can obtain $\Ebf(\Phi)$ as a prewhitening matrix, albeit a suboptimal one.

\section*{Appendix B: A Closed-form Solution to (\ref{costFuncfor})} \label{der.closeform}
We rewrite (\ref{costFuncfor}) and replace $\phi_{l,n}$ by $\phi$ for notational simplicity in the below
\ben
\mathop{\text{max}}_{\phi}g(\phi)= \frac{f^{\frac{1}{M}}(\phi)}{h(\phi)},
\label{equ.costfuncAppend}
\een
where
\ben
\begin{split}
f(\phi)\triangleq\alpha+r_1\text{cos}(\phi-\varphi_1
), \\
h(\phi)\triangleq\beta+r_2\text{cos}(\phi-\varphi_2
).
\end{split}
\label{equ.fandh}
\een
Equating $\frac{dg}{d\phi}=0$ yields
\ben
f^\prime(\phi)h(\phi)-Mf(\phi)h^\prime(\phi)=0.
\label{equ.der}
\een
Denoting $z=\text{tan}(\frac{\phi}{2})$, we rewrite (\ref{equ.der}) as
\ben
z^4+a_3z^3+a_2z^2+a_1z+a_0=0
\label{equ.poly}
\een
with
\ben
\begin{split}
a_3=&\frac{2v_1\beta-2Mv_2\alpha+(2M-2)(v_1v_2-u_1u_2)}{u_1(\beta-v_2)-Mu_2(\alpha-v_1)},\\
a_2=&\frac{(4-2M)u_2v_1+(2-4M)u_1v_2}{u_1(\beta-v_2)-Mu_2(\alpha-v_1)},\\
a_1=&\frac{2v_1\beta-2Mv_2\alpha+(2M-2)(u_1u_2-v_1v_2)}{u_1(\beta-v_2)-Mu_2(\alpha-v_1)},\\
a_0=&\frac{Mu_2(\alpha+v_1)-u_1(\beta+v_2)}{u_1(\beta-v_2)-Mu_2(\alpha-v_1)},
\end{split}
\een
where
\ben
\begin{split}
u_1=r_1\text{sin}(\varphi_1),\quad &v_1=r_1\text{cos}(\varphi_1),\\
u_2=r_2\text{sin}(\varphi_2),\quad &v_2=r_2\text{cos}(\varphi_2).
\end{split}
\een
The quartic equation (\ref{equ.poly}) has four roots of closed-form
\ben \label{solu.poly}
\begin{split}
z_1,z_2=-\frac{1}{4}a_3+\frac{1}{2}R\pm\frac{1}{2}D, \\
z_3,z_4=-\frac{1}{4}a_3-\frac{1}{2}R\pm\frac{1}{2}E,
\end{split}
\een
where
\begin{small}
\ben
\begin{split}
R=&\sqrt{\frac{1}{4}a_3^2-a_2+t_1}, \\
D=&\left\{
\begin{aligned}
&\sqrt{\frac{3}{4}a_3^2-R^2-2a_2+\frac{1}{4}(4a_3a_2-8a_1-a_3^3)R^{-1}},&R\ne0 \\
&\sqrt{\frac{3}{4}a_3^2-2a_2+2\sqrt{t_1^2-4a_0}},&R=0
\end{aligned}
\right. \\
E=&\left\{
\begin{aligned}
\footnotesize
&\sqrt{\frac{3}{4}a_3^2-R^2-2a_2-\frac{1}{4}(4a_3a_2-8a_1-a_3^3)R^{-1}},&R\ne0 \\
&\sqrt{\frac{3}{4}a_3^2-2a_2-2\sqrt{t_1^2-4a_0}},&R=0
\end{aligned}
\right.
\end{split}
\een
\end{small}
and $t_1$ is a real-valued root of the cubic equation
\ben \label{equ.cube}
t^3-a_2t^2+(a_1a_3-4a_0)t+(4a_2a_0-a_1^2-a_3^2a_0)=0,
\een
which also has analytic solutions \cite{abramowitz1964handbook}.

Denoting ${\cal Z}$ as the set of real-valued $z_i$'s in (\ref{solu.poly}), we obtain the set of the potential solutions to (\ref{equ.costfuncAppend}) as $\Xi=\{\phi|\phi=2\text{arctan}(z_i),z_i\in{\cal Z}\}$. Thus, for phase shifter of $\infty$
resolution, the solution to (\ref{costFuncfor}) is
\ben
\phi_{\infty}=\mbox{arg}\mathop{\text{max}}_{\phi \in \Xi} g(\phi),
\label{equ.infty}
\een
which is easy as the cardinality $|\Xi| \le 4$.

For phase shifters of $b$-bit resolution, we show in the next that the solution to (\ref{costFuncfor})
\ben \label{equ.finitySet}
\phi_b=\mbox{arg}\mathop{\text{max}}_{\phi \in \Omega} g(\phi),
\een
can be found by only checking two points in $\Omega$ neighboring to $\phi_{\infty}$.

\begin{figure}[htb]
\centering
\includegraphics[width=3.5in]{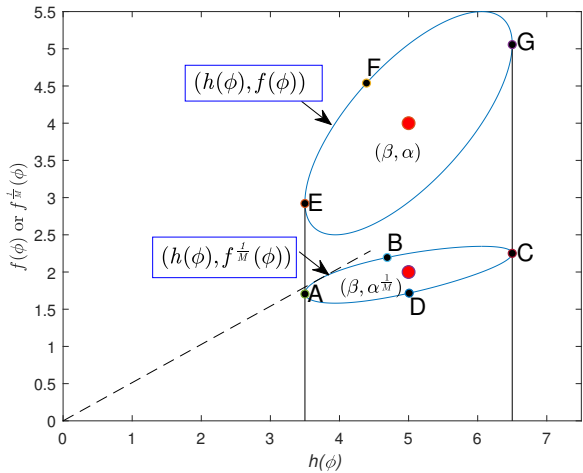}
\caption{The traces of ($h(\phi)$,$f(\phi)$) and ($h(\phi)$,$f^{\frac{1}{M}}(\phi)$) as $\phi$ varies from 0 to $2\pi$. Here we illustrate the case of $M=2$.}
\label{fig.trace}
\end{figure}

Figure \ref{fig.trace} illustrates the trace of $(h(\phi),f(\phi))$ and $(h(\phi),f^{\frac{1}{M}}(\phi))$ with
$h(\phi)$ and $f(\phi)$ being the x-axis and y-axis as $\phi$ varies from $0$ to $2\pi$, for which we see that to solve (\ref{equ.costfuncAppend}) amounts to finding a point $P$ on the shape of $ABCD$ so that the line connecting the origin and $P$ has the steepest slope.

The trace of $(h(\phi),f(\phi))$ forms an ellipsoid centered at $(\beta,\alpha)$ according to (\ref{equ.fandh}), and the arc $\wideparen{EFG}$ of the ellipsoid can be represented by $\eta_1$ as a function of $h(\phi)$
\ben
\eta_1(h(\phi)), \beta-r_2\le h(\phi)\le\beta+r_2,
\een
which is a concave function of $h(\phi)$, Thus,
(\ref{equ.costfuncAppend}) can be formulated as
\ben
\mathop{\text{max}}_{h\in[\beta-r_2,\beta+r_2]}\eta_2(h)\triangleq \frac{\eta_1^{\frac{1}{M}}(h)}{h},
\label{equ.hinfty}
\een
where $h(\phi)$ is replaced with $h$ for notational simplicity.
\begin{Lemma}\label{lemma2}
$\eta_2(h)$ has only one maximum at $ h_0\in (\beta-r_2, \beta+r_2)$, and it is monotonously increasing for $h\in[\beta-r_2, h_0]$ and is monotonously decreasing for $h\in[ h_0,\beta+r_2]$.
\end{Lemma}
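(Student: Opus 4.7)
The plan is to deduce the lemma from strict quasi-concavity of $\eta_2$ on $[\beta-r_2,\beta+r_2]$ together with a boundary check that places the maximizer in the open interior.

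First, I would show that $\tilde\eta(h)\triangleq\eta_1^{1/M}(h)$ is strictly concave on $(\beta-r_2,\beta+r_2)$. Since $\eta_1(h)=\alpha+r_1\sqrt{1-(h-\beta)^2/r_2^2}$ is smooth with $\eta_1''<0$ strictly on the open interval, and $\eta_1\ge\alpha>0$ throughout the arc, the chain rule gives
\begin{equation*}
\tilde\eta''=\tfrac{1}{M}\bigl(\tfrac{1}{M}-1\bigr)\eta_1^{\frac{1}{M}-2}(\eta_1')^{2}+\tfrac{1}{M}\eta_1^{\frac{1}{M}-1}\eta_1''<0
\end{equation*}
for every $M\ge 1$, since the first summand is non-positive and the second is strictly negative.

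Next, I would upgrade this to strict quasi-concavity of $\eta_2=\tilde\eta/h$ on the positive interval. For any level $c$ and any $h_1<h_2$ with $\eta_2(h_i)\ge c$, equivalently $\tilde\eta(h_i)\ge c h_i$, strict concavity of $\tilde\eta$ combined with the linearity of $c h$ in $h$ yields, for every $\lambda\in(0,1)$ and $h=\lambda h_1+(1-\lambda)h_2$,
\begin{equation*}
\tilde\eta(h)>\lambda\tilde\eta(h_1)+(1-\lambda)\tilde\eta(h_2)\ge c h,
\end{equation*}
so $\eta_2(h)>c$. Hence every superlevel set of $\eta_2$ is a subinterval on which $\eta_2$ strictly exceeds its boundary value, i.e.\ $\eta_2$ is strictly quasi-concave.

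Finally, I would verify that the optimizer is interior by inspecting the one-sided derivatives at the endpoints. From $\eta_1(\beta\pm r_2)=\alpha$ and the vertical tangents of the arc at the endpoints, namely $\eta_1'(\beta-r_2)=+\infty$ and $\eta_1'(\beta+r_2)=-\infty$, direct differentiation produces $\eta_2'(\beta-r_2)=+\infty$ and $\eta_2'(\beta+r_2)=-\infty$, so $\eta_2$ is strictly increasing near the left endpoint and strictly decreasing near the right endpoint. Any maximizer $h_0$ of the continuous $\eta_2$ on $[\beta-r_2,\beta+r_2]$ therefore lies in $(\beta-r_2,\beta+r_2)$; strict quasi-concavity then forces $h_0$ to be unique (two distinct maximizers would contradict the strict intermediate inequality) and forces $\eta_2$ to be strictly monotone on each side of $h_0$, exactly as claimed. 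I expect the main obstacle to be the second step: the argument must invoke both \emph{strict} concavity and positivity of the \emph{numerator} $\tilde\eta$, so the real content lies in the second-derivative computation of the first step and in propagating the strictness uniformly across the interval.
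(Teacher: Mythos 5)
Your proof is correct and reaches the lemma by a route that is logically different from, though computationally equivalent to, the paper's. The paper works directly with the numerator $\eta_3(h)=\frac{1}{M}\eta_1^{\frac{1}{M}-1}\eta_1'h-\eta_1^{\frac{1}{M}}$ of $\eta_2'$, shows $\eta_3'(h)\le 0$, and combines this with $\eta_3\to+\infty$ and $\eta_3\to-\infty$ at the two endpoints to conclude that $\eta_2'$ changes sign exactly once. Note that $\eta_3'(h)=h\,\tilde\eta''(h)$ with $\tilde\eta=\eta_1^{1/M}$, so the paper's key inequality is literally your strict-concavity computation in the first step; the two proofs diverge only in how they convert that fact into unimodality. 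Your packaging --- strict concavity of the numerator, the standard superlevel-set argument showing that a strictly concave function divided by a positive linear one is strictly quasi-concave, and an endpoint-derivative check to push the maximizer into the interior --- is more modular and avoids the limit analysis of $\eta_3$, at the price of needing $h>0$ strictly (which holds, since $h(\phi)=\xbf^H\Bbf\xbf$ with $\Bbf\succ 0$) and genuinely strict concavity. Two small inaccuracies to fix: the explicit formula $\eta_1(h)=\alpha+r_1\sqrt{1-(h-\beta)^2/r_2^2}$ is valid only when $\varphi_1=\varphi_2$; in general the ellipse is tilted and $\eta_1$ carries an additional linear term $\frac{r_1\cos(\varphi_1-\varphi_2)}{r_2}(h-\beta)$, which changes neither the strict concavity of the upper arc nor the vertical tangents at $h=\beta\pm r_2$, so nothing downstream breaks. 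Likewise $\eta_1\ge\alpha$ is not true in general; what is true, and what both your argument and the paper's actually need, is $\eta_1\ge 0$ (it is a quadratic form in the positive semidefinite $\Abf$), together with the implicit nondegeneracy assumption $\eta_1>0$ on the arc that the paper also makes when it writes $\eta_1^{\frac{1}{M}-2}$.
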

\begin{proof}
 The first-order derivative  of $\eta_2(h)$ is
\ben
\frac{d\eta_2(h)}{dh}=\frac{\frac{1}{M}\eta_1^{\frac{1}{M}-1}(h)\eta_1^\prime(h)h-\eta_1^{\frac{1}{M}}(h)}{h^2}.
\label{equ.firDer}
\een
Denote the numerator of (\ref{equ.firDer}) as
\ben
\eta_3(h)\triangleq\frac{1}{M}\eta_1^{\frac{1}{M}-1}(h)\eta_1^\prime(h)h-\eta_1^{\frac{1}{M}}(h).
\een
Then the first-order derivative
\begin{align}
\frac{d\eta_3(h)}{dh} =& \left(\frac{1}{M^2}-\frac{1}{M}\right)\eta_1^{\frac{1}{M}-2}(h)(\eta_1^\prime(h))^2h \nonumber\\
&
+\frac{1}{M}\eta_1^{\frac{1}{M}-1}(h)\eta_1^{\prime \prime}(h)h \le 0, \label{equ.deta3}
\end{align}
since
\ben
\frac{1}{M^2}-\frac{1}{M} <0,\, \eta_1(h)\ge0,\, \eta_1^{''}(h)\le0, \mbox{ and } h\ge0.
\een
Thus, $\eta_3(h)$ is a monotonously decreasing function as $h$ varies from $\beta-r_2$ to $\beta+r_2$. Moreover, we have
\ben
\begin{split}
\lim_{h \to (\beta-r_2)^{+}}\eta_3(h)=+\infty, \\
\lim_{h \to (\beta+r_2)^{-}}\eta_3(h)=-\infty,
\end{split}
\label{equ.limit}
\een
since
\ben
\begin{split}
\lim_{h \to (\beta-r_2)^{+}}\eta_1^\prime(h)=+\infty, \\
\lim_{h \to (\beta+r_2)^{-}}\eta_1^\prime(h)=-\infty.
\end{split}
\een
Combining (\ref{equ.deta3}) and (\ref{equ.limit}), we can conclude that the equation $\eta_3(h)=0$ has only one root  $ h_0\in [\beta-r_2, \beta+r_2]$ and
\ben
\begin{split}
\eta_3(h)\ge0, \beta-r_2\le h\le h_0, \\
\eta_3(h)\le0,  h_0< h\le \beta+r_2,
\end{split}
\een
which indicates that $\eta_2(h)$ is monotonously increasing for $h\in[\beta-r_2, h_0]$ while it's monotonously decreasing for $h\in[ h_0,\beta+r_2]$. Thus, $ h_0$ is the only maximum point of $\eta_2(h)$.

\end{proof}
If phase shifter's bit resolutions $b<\infty$ and $\phi\in\Omega$, $h$ lies in the set
\ben
\Theta = \{h|h=\beta+r_2\text{cos}(\phi-\varphi_2),\phi\in\Omega\},
\een
which turns (\ref{equ.hinfty}) into
\ben
\mathop{\text{max}}_{h\in\Theta}\eta_2(h)=\frac{\eta_1^{\frac{1}{M}}(h)}{h}.
\label{equ.hfty}
\een

Denote
\ben
 h_0 \triangleq \beta+r_2\text{cos}(\phi_{\infty}-\varphi_2),
\een
where $\phi_{\infty}$ is the solution to (\ref{equ.infty}).
Due to the monotonicity of $\eta_2(h)$ proved in Lemma \ref{lemma2}, the solution of (\ref{equ.hfty}) must be in the set $\{h_l,h_r\}$
where
\ben
h_l=\mathop{\text{max}}_{h\le h_0,h\in\Theta}h, \quad h_r=\mathop{\text{min}}_{h\ge h_0,h\in\Theta}h,
\label{equ.landr}
\een
and $h$ is a monotonous function of $\phi$ as $h$ varies from $\beta-r_2$ to $\beta+r_2$ on the arc $\wideparen{ABC}$.

Thus, solving (\ref{equ.landr}) is equivalent to finding an integer $\tilde{k}$ that satisfies
\ben
\frac{2(\tilde{k}-1)\pi}{2^b}\le\phi_{\infty}\le\frac{2\tilde{k}\pi}{2^b}.
\een
Then the solution to (\ref{equ.finitySet}) is simply
\ben
\phi_b = \left\{
\ba{ll} \frac{2(\tilde{k}-1)\pi}{2^b}, & g(\frac{2(\tilde{k}-1)\pi}{2^b})>g(\frac{2\tilde{k}\pi}{2^b}). \\
\frac{2\tilde{k}\pi}{2^b}, & \text{otherwise}.
\ea \right.
\label{equ.finitySet2}
\een
where $g(\cdot)$ is as defined in (\ref{equ.costfuncAppend}).

\bibliographystyle{ieeetran}
\bibliography{bib}
\end{document}